\DeclareMathOperator{\wds}{WDS}
\DeclareMathOperator{\diam}{diam}
\begin{document}

%%%%%%%%%%%%%%%%%%%%%%%%%%%%%%%%%%%%%%%%%%%%%%%%%%%%%%%%%%%%%%%%%%%%%%%%%%%%%%%%%%%%%%%%%%%%%
\newtheorem{definition}{Definition}[section]
\newtheorem{proposition}{Proposition}[section]
\newtheorem{conclusion}{Conclusion}[section]
\newtheorem{lemma}{Lemma}[section]
\newtheorem{theorem}{Theorem}[section]
\newtheorem{thmprime}{Theorem}[section]
\newtheorem{corollary}{Corollary}[section]
\renewcommand{\thecorollary}{}
%%%%%%%%%%%%%%%%%%%%%%%%%%%%%%%%%%%%%%%%%%%%%%%%%%%%%%%%%%%%%%%%%%%%%%%%%%%%%%%%%%%%%%%%%%%%%

\begin{titlepage}
\title{
Completeness of the WDS method in Checking Positivity of Integral Forms
\footnote{
Partially supported by a National Key Basic Research Project of China
(2004CB318000) and by National Natural Science Foundation of China
(10571095)
}
}
\date{}
\author{
Xiaorong Hou\footnote{The author to whom all correspondence should be sent.}, Junwei Shao\\
\textit{\small School of Automation Engineering,
University of Electronic Science and Technology of China, Sichuan, PRC}\\
\textit{\small E-mail: \href{mailto:houxr@uestc.edu.cn}{houxr@uestc.edu.cn},
\href{mailto:junweishao@gmail.com}{junweishao@gmail.com} }
}
\end{titlepage}
\maketitle

\noindent\textbf{Abstract:}
Examples show that integral forms can be efficiently proved positive semidefinite by the WDS method,
but it was unknown that
how many steps of substitutions are needed,
or furthermore, which integral forms is this method applicable for.
In this paper, we give upper bounds of step numbers of WDS
required in proving that an integral form is positive definite, positive semidefinite,
or not positive semidefinite,
thus deducing that the WDS method is complete.\\[2mm]
\textbf{Key words:}
integral form; weighted difference substitution; positivity\\[2mm]
\textbf{AMS subject classification(2000):} 26C99, 26D99

\section{Introduction}
Polynomials play key roles in many fields of the system theory \cite{bose:1},
fundamental problems in automatic control, filter theory and network realization
need to check some properties of polynomials, and positivity of
polynomials is an important one of such properties \cite{bose:2}.
When checking positivity of polynomials using traditional methods for
proving inequalities, complexities of algorithms are increasing rapidly as
variable number increases \cite{yang:1}.
Nowadays, Lu Yang \cite{yang:6} proposed a concise method to prove positivity of
homogeneous polynomials (i.e., forms), that is \textbf{difference substitution} (DS),
or its varied form \textbf{weighted difference substitution} (WDS).
This method demonstrates great efficiency.
\cite{yao:1} further showed that
if a form is indeed \textbf{positive definite} (PD) or not \textbf{positive semidefinite} (PSD),
then these properties can be checked by finite steps of WDS.
For integral forms, we estimate in this paper upper bounds of step numbers
required in checking these properties,
they only depend on the variable numbers, the degrees and the upper bounds of absolute values of
coefficients of the forms.
Therefore, we can also prove whether an integral form is PD through finite steps of WDS.

\section{Main Result}
We first introduce following definitions and notations according to \cite{yao:1}.

Considering $T_n \in \mathbb{R} ^{n\times n}$, where
\begin{equation}
T_n = \left(
\begin{array}{cccc}
1 & \frac{1}{2}& \ldots & \frac{1}{n} \\[2pt]
0 & \frac{1}{2} & \ldots & \frac{1}{n}\\[2pt]
\vdots & \ddots& \ddots& \vdots\\[2pt]
0 & \ldots & 0 & \frac{1}{n}
\end{array}
\right).
\end{equation}
Denote by $\Theta_n$ the set of all $n!$ permutations of $\{1,2,\ldots,n\}$,
for $(k_1 k_2 \ldots k_n) \in \Theta_n$,
let $P_{(k_1k_2...k_n)}=(a_{ij})_{n \times n}$ be the permutation corresponding
$(k_1 k_2 \ldots k_n)$, that is
$$
a_{ij}=\left\{
\begin{array}{ll}
1, & j=k_i\\
0, & j \neq k_i
\end{array}
\right..
$$
Let
$$A_{(k_1k_2...k_n)}=P_{(k_1k_2...k_n)}T_n,$$
and call it the WDS matrix
determined by permutation $(k_1k_2...k_n)$,
denote by $\Gamma_n$ the set of all $n!$ such matrices.
The variable substitution $\mathbf{x} = A_{(k_1 k_2 \ldots k_n)} \mathbf{y}$ corresponding
$(k_1 k_2 \ldots k_n)$ is called a WDS,
where $\mathbf{x} = (x_1,x_2,\ldots,x_n)^T, \mathbf{y} = (y_1,y_2,\ldots,y_n)^T$,
the following set of substitutions
\begin{equation}
\{ \mathbf{x} = A_1 A_2 \cdots A_m \mathbf{y}: A_i \in \Gamma_n \} \nonumber
\end{equation}
is called the $m$-th WDS set.

Suppose $A=(a_{ij}) \in \mathbb{R}^{n \times n}$,
we call it a normal matrix, and the corresponding substitution a normal substitution
if $\sum\limits_{i=1}^n a_{ij}=1, j=1,2, \ldots n$.
Thus WDS matrices are normal matrices and WDS substitutions are normal substitutions.

\begin{lemma} \label{normalmatrix}
Let $A = (a_{ij})_{n \times n} = B_1 B_2 \cdots B_k$,
where $B_i(i=1,2,\ldots, k)$ are all normal matrices.
Then $A$ is a normal matrix,
and for the substitution $\mathbf{x}=A \mathbf{y}$,
we have $\sum\limits_{i=1}^{n} x_i = \sum\limits_{i=1}^{n} y_i$.
\end{lemma}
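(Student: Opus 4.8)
The plan is to reformulate the normality condition as a one-line linear-algebra identity. Write $\mathbf{1} = (1,1,\ldots,1)^T \in \mathbb{R}^n$. Then a matrix $B = (b_{ij})_{n\times n}$ is normal precisely when $\mathbf{1}^T B = \mathbf{1}^T$, because the $j$-th entry of $\mathbf{1}^T B$ is exactly the $j$-th column sum $\sum_{i=1}^n b_{ij}$. This reformulation is the whole content of the argument; everything else is associativity.

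Granting this, the first assertion is immediate. Since each $B_i$ satisfies $\mathbf{1}^T B_i = \mathbf{1}^T$, peeling off the factors one at a time gives
\[
\mathbf{1}^T A = \mathbf{1}^T B_1 B_2 \cdots B_k = (\mathbf{1}^T B_1) B_2 \cdots B_k = \mathbf{1}^T B_2 \cdots B_k = \cdots = \mathbf{1}^T ,
\]
so every column sum of $A$ equals $1$ and $A$ is normal. (If one prefers to avoid vector notation, the same thing is an induction on $k$, the inductive step being the direct computation $\sum_{i} (BC)_{ij} = \sum_{\ell}\big(\sum_i b_{i\ell}\big) c_{\ell j} = \sum_{\ell} c_{\ell j} = 1$ for normal $B,C$.)

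For the second assertion, apply the identity $\mathbf{1}^T A = \mathbf{1}^T$ to the substitution $\mathbf{x} = A\mathbf{y}$:
\[
\sum_{i=1}^{n} x_i = \mathbf{1}^T \mathbf{x} = \mathbf{1}^T (A\mathbf{y}) = (\mathbf{1}^T A)\,\mathbf{y} = \mathbf{1}^T \mathbf{y} = \sum_{i=1}^{n} y_i .
\]
There is no genuine obstacle here: the only point that needs to be noticed is that "all column sums equal $1$" is the same statement as "$\mathbf{1}^T B = \mathbf{1}^T$", after which both claims fall out of the associativity of matrix multiplication. The mild bookkeeping is just making sure the product is associated the right way so that each $\mathbf{1}^T B_i$ collapses in turn.
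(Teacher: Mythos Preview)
Your proof is correct and follows essentially the same approach as the paper: both verify that the product of two normal matrices is normal via the column-sum computation (yours packaged as $\mathbf{1}^T B = \mathbf{1}^T$, the paper's written out in summation notation), extend to $k$ factors by induction, and then derive $\sum_i x_i = \sum_i y_i$ by the same one-line calculation. The vector formulation is a tidy repackaging but not a different idea.
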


\begin{proof}
Suppose $B_1 = (b_{1ij}),B_2 = (b_{2ij})$ are normal matrices,
let $C=B_1 B_2$, and denote by $C=(c_{ij})$,
then
$$
\sum\limits_{i=1}^n c_{ij} = \sum\limits_{i=1}^n \sum\limits_{k=1}^n b_{1ik} b_{2kj}
= \sum\limits_{k=1}^n \left( \sum\limits_{i=1}^n b_{1ik} \right) b_{2kj}
= \sum\limits_{k=1}^n b_{2kj}
= 1.
$$
Thus $C$ is normal, and further we can prove $A$ is normal by introduction.
Moreover, we have
$$
\sum\limits_{i=1}^{n} x_i
= \sum\limits_{i=1}^{n} \sum\limits_{j=1}^n a_{ij} y_j
= \sum\limits_{j=1}^n \left( \sum\limits_{i=1}^{n} a_{ij} \right) y_j
= \sum\limits_{j=1}^n y_j.
$$

\end{proof}

Let $f(\mathbf{x}) \in \mathbb{R}[x_1,x_2,\ldots,x_n]$ be a form, we call
\begin{equation}
\wds(f)=\bigcup\limits_{\theta \in \Theta_n} \{ f(A_{\theta} \mathbf{x}) \}
\end{equation}
the WDS set of $f$,
\begin{equation}
\wds^{(m)}(f)=\bigcup\limits_{\theta_m \in \Theta_n} \cdots \bigcup\limits_{\theta_1 \in \Theta_n}
\{ f(A_{\theta_m} \cdots A_{\theta_1} \mathbf{x}) \}
\end{equation}
the $m$-th WDS set of $f$ for positive integer $m$, and set $\wds^{(0)}(f)=\{ f \}$.

Denote by $\mathbb{N}$ the set of nonnegative integers,
let $\alpha=(\alpha_1,\alpha_2,\ldots,\alpha_n) \in \mathbb{N}^{n}$,
$|\alpha|=\alpha_1+\alpha_2+ \cdots +\alpha_n$.
For a form of degree $d$
$$
f(x_1,x_2,\ldots,x_n)=\sum\limits_{|\alpha|=d} c_\alpha x_1^{\alpha_1}x_2^{\alpha_2}\cdots x_n^{\alpha_n},
$$
if all coefficients $c_\alpha$ are nonzero, we say $f$ has complete monomials.

Let
$$
\mathbb{R}_+^n = \{ (x_1,x_2,\ldots,x_n)^T: x_i \geq 0, i=1,\ldots,n \},
$$
denote the $(n-1)$-dimensional simplex in $\mathbb{R}^n$ by
$$
\Delta_{n} = \left\{
(x_1,x_2,\ldots,x_n)^T: \sum\limits_{i=1}^n x_i = 1, (x_1,x_2,\ldots,x_n)^T \in \mathbb{R}_+^n
\right\}.
$$
and let
$$
\widetilde{\Delta}_{n} = \left\{
(x_1,x_2,\ldots,x_n)^T: \sum\limits_{i=1}^n x_i \leq 1, (x_1,x_2,\ldots,x_n)^T \in \mathbb{R}_+^n
\right\},
$$

\begin{definition}
Suppose $D \subset \mathbb{R}^n$, $f(\mathbf{x}) \in \mathbb{R}[x_1,x_2,\ldots,x_n]$,
$f(\mathbf{x})$ is PD in $D$ if $f(\mathbf{x})>0$ for any $\mathbf{x} \in D \setminus \{ \mathbf{0} \}$,
and it is PSD in $D$ if $f(\mathbf{x}) \geq 0$ for any $\mathbf{x} \in D \setminus \{ \mathbf{0} \}$.
\end{definition}

Obviously, we have,
\begin{lemma}  \label{lemma:equivalence}
A form $f \in \mathbb{R}[x_1,x_2,\ldots,x_n]$ has the same
positivity in $\mathbb{R}_{+}^{n}, \Delta_{n}$ and
$\widetilde{\Delta}_{n}$.
\end{lemma}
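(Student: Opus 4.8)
The plan is to exploit the homogeneity of $f$ together with the chain of inclusions $\Delta_n \subset \widetilde{\Delta}_n \subset \mathbb{R}_+^n$. Writing $d$ for the degree of $f$, homogeneity gives $f(\lambda \mathbf{x}) = \lambda^d f(\mathbf{x})$ for every $\lambda > 0$ and every $\mathbf{x} \in \mathbb{R}^n$; since $\lambda^d > 0$, rescaling by such a $\lambda$ never changes the sign of $f(\mathbf{x})$ nor whether it vanishes. This single observation is the whole engine of the argument.

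First I would record the easy direction. Because $\Delta_n \subset \widetilde{\Delta}_n \subset \mathbb{R}_+^n$, if $f$ is PD (resp. PSD) on $\mathbb{R}_+^n$ then it is automatically PD (resp. PSD) on $\widetilde{\Delta}_n$, and PD (resp. PSD) on $\widetilde{\Delta}_n$ likewise forces the same on $\Delta_n$. No computation is needed here.

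Next I would close the loop by showing that positivity on $\Delta_n$ already implies positivity on all of $\mathbb{R}_+^n$. Given $\mathbf{x} \in \mathbb{R}_+^n \setminus \{\mathbf{0}\}$, set $s = \sum_{i=1}^n x_i$; then $s > 0$ and $\mathbf{x}/s \in \Delta_n$. By homogeneity $f(\mathbf{x}) = s^d f(\mathbf{x}/s)$, so $f(\mathbf{x})$ and $f(\mathbf{x}/s)$ have the same sign and vanish together. Hence $f > 0$ (resp. $\geq 0$) on $\Delta_n$ — note $\mathbf{0} \notin \Delta_n$, so there is no excluded boundary point to worry about — yields $f > 0$ (resp. $\geq 0$) on $\mathbb{R}_+^n \setminus \{\mathbf{0}\}$. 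Chaining the implications (positivity on $\mathbb{R}_+^n$) $\Rightarrow$ (positivity on $\widetilde{\Delta}_n$) $\Rightarrow$ (positivity on $\Delta_n$) $\Rightarrow$ (positivity on $\mathbb{R}_+^n$) shows the three notions coincide, for both PD and PSD.

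There is essentially no hard step; the only points deserving a line of care are that the rescaling factor $s$ is strictly positive (so the sign of $f$ is preserved) and that the origin, excluded in the definition of PD/PSD, lies outside $\Delta_n$. The degenerate case $d = 0$, if one insists on admitting it, is a constant form and is handled directly.
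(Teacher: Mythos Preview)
Your argument is correct and is exactly the standard homogeneity-plus-rescaling reasoning the paper has in mind; the paper itself gives no proof, merely prefacing the lemma with ``Obviously, we have,'' so your write-up simply spells out the obvious details.
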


Denote by $\mathbb{Z}$ the set of integers.
We deduce the following result for integral forms.
\begin{theorem} \label{thm:poster}
Suppose $f \in \mathbb{Z}[x_1, x_2, \ldots, x_n]$ is a form of degree $d$,
and the absolute values of its coefficients do not exceed $M$,
then we have
\begin{enumerate}
\item
$f$ is PD in $\Delta_n$,
if and only if there exists $m \leq C_p(M,n,d)$,
such that each form in $\wds^{(m)}(f)$ has complete monomials, and its
coefficients are all positive;
\item
$f$ is not PSD in $\Delta_n$ (i.e., the minimum of $f$ in $\Delta_n$ is negative),
if and only if there exists $m \leq C_{nps}(M,n,d)$,
such that a form in $\wds^{(m)}(f)$ has complete monomials, and its
coefficients are all negative.
\end{enumerate}
where
\begin{eqnarray}
C_p(M,n,d)= \left[ \dfrac{ \ln \left( 2^{d^n} M^{d^n+1} n^{d^{n+1}+d} d^{(n+1)d+n d^n} (d+1)^{(n-1)(n+2)} \right)}
{\ln n - \ln (n-1)} \right] + 2 \\
C_{nps}(M,n,d)= \left[ \dfrac{ \ln \left( 2^{d^n+1} M^{d^n+1} n^{d^{n+1}+d} d^{(n+1)d+n d^n} (d+1)^{(n-1)(n+2)} \right)}
{\ln n - \ln (n-1)} \right] + 2
\end{eqnarray}
\end{theorem}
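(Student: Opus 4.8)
The plan is to reduce both statements to a quantitative, effective version of the convergence of the WDS process. The key geometric fact behind WDS is that applying one WDS substitution $\mathbf{x} = A_\theta \mathbf{y}$ corresponds to subdividing the simplex $\Delta_n$ into $n!$ subsimplices (one for each permutation), and iterating $m$ times subdivides $\Delta_n$ into $(n!)^m$ cells whose diameters shrink geometrically — the crucial ratio being $\frac{n-1}{n}$ per step in an appropriate norm, which is exactly why $\ln n - \ln(n-1)$ appears in the denominator of the bounds. So the first step is to make this precise: show that every cell of the $m$-th subdivision has diameter at most $\rho^m\,\diam(\Delta_n)$ with $\rho = \frac{n-1}{n}$, and, dually, that the coefficients of a form $g \in \wds^{(m)}(f)$ are (up to a controlled denominator) the values of $f$ at the vertices of the corresponding cell, or more precisely finite differences of such values.

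Next I would handle the "easy" directions and then the substantive ones. If some $g \in \wds^{(m)}(f)$ has complete monomials with all positive coefficients, then $g > 0$ on $\widetilde\Delta_n \setminus \{\mathbf 0\}$, and since the WDS matrix is normal (Lemma \ref{normalmatrix}) and maps $\Delta_n$ onto a subsimplex, pulling back gives $f > 0$ on that piece; ranging over all permutations covers $\Delta_n$, so $f$ is PD there — and by Lemma \ref{lemma:equivalence} PD in the other domains too. The analogous argument with all-negative coefficients shows $f$ takes a negative value, hence is not PSD. For the converse of (1): if $f$ is PD on the compact set $\Delta_n$, it has a positive minimum $\mu > 0$, and on each sufficiently small cell $f$ is close to $\mu$; combined with a bound on the modulus of continuity of $f$ on $\Delta_n$ in terms of $M, n, d$, once the cell diameter is below an explicit threshold the pulled-back form is forced to have all coefficients positive (here one uses that for a form written in the $\mathbf y$-coordinates the coefficients dominate, after dividing by the positive normalizing factors, the minimum of the form minus an error term controlled by the diameter). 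Taking logarithms of the threshold inequality $\rho^m \cdot (\text{stuff}) < \mu$ and bounding $\mu$ from below — this is where integrality is essential, giving a lower bound on $\mu$ like the reciprocal of a polynomial in $M$ with exponents polynomial in $n,d$, via a resultant/discriminant or a direct estimate on the minimum of a nonvanishing integral polynomial on a rational polytope — yields the stated $C_p(M,n,d)$. For (2), if $f$ is not PSD, it is negative somewhere in the interior of $\Delta_n$; one locates a cell on which $f < 0$ throughout and of small enough diameter that the pulled-back coefficients are all negative, and the quantitative bound on "how negative" $f$ must be — again using integrality to lower-bound $-\min f$ away from $0$ — gives $C_{nps}(M,n,d)$, which differs from $C_p$ only by the extra factor $2$ inside the logarithm.

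The main obstacle I anticipate is the passage from "$f$ is small/negative on a tiny cell" to "the pulled-back form has all coefficients of the right sign": a form can be positive on a simplex without having positive coefficients in barycentric-type coordinates, so one cannot argue pointwise alone. The resolution is to show that repeated WDS drives the form toward its constant term (the value at the apex) plus terms that are uniformly small relative to that constant — i.e. the off-diagonal coefficients decay while the "diagonal" ones converge to vertex values of $f$ — so that sign-definiteness of $f$ with a quantitative gap eventually forces sign-definiteness of all coefficients. Making the decay rate and the error estimate fully explicit in $M, n, d$, and correctly tracking the accumulated denominators from the matrices $T_n$ (whose entries are $1/k$), is the technical heart; the bookkeeping is what produces the precise exponents $d^n$, $d^{n+1}$, $(n-1)(n+2)$, etc., in the statement. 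The lower bound on $|\min f|$ for integral $f$ not identically of one sign — essentially a Łojasiewicz-type or root-separation estimate made effective for polytopes — is the other place where real work is needed, but it is standard in spirit and only affects the constants, not the structure of the argument.
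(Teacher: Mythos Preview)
Your proposal is correct and takes essentially the same route as the paper: write the $m$-th WDS matrix with first column $\alpha\in\Delta_n$ and remaining columns $\alpha+\beta_{\cdot j}$, use the barycentric-diameter lemma to get $|\beta_{ij}|\le((n-1)/n)^m$, expand $f(A\mathbf y)=f(\alpha)(y_1+\cdots+y_n)^d+\phi$ so that each coefficient equals $\tbinom{d}{i_1,\ldots,i_n}f(\alpha)+\phi_{i_1\cdots i_n}$ with $|\phi_{i_1\cdots i_n}|$ explicitly bounded, and combine with the Jeronimo--Perrucci-type lower bound $|\min_{\Delta_n}f|\ge C_1(M,n,d)$ for integral forms to force the sign of every coefficient once $m$ exceeds the stated threshold. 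One small correction to your sketch: it is not that ``off-diagonal coefficients decay while diagonal ones converge to vertex values'' --- \emph{every} coefficient of the transformed form receives the same main term $\tbinom{d}{i_1,\ldots,i_n}f(\alpha)$ from the multinomial expansion of $f(\alpha)u^d$, and it is the residual $\phi$ that decays uniformly; this is precisely what resolves the obstacle you flagged.
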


Thus, we can completely determine positivity of
$f$ through checking positivity of coefficients of forms in
$\wds^{(C_{nps}(M,n,d))}(f)$:
\begin{enumerate}
\item
If each form in $\wds^{(C_{nps}(M,n,d))}(f)$ has complete monomials,
and its coefficients are all positive, then $f$ is PD in $\Delta_n$;

\item
If each form in $\wds^{(C_{nps}(M,n,d))}(f)$ has a nonnegative coefficient,
then $f$ is PSD in $\Delta_n$;

\item
If there exists a form in $\wds^{(C_{nps}(M,n,d))}(f)$ has complete monomials,
and its coefficients are all negative, then $f$ is not PSD in $\Delta_n$.
\end{enumerate}

\section{Estimate for lower bounds of positive definite integral forms in the simplex}
\cite{basu:1} gives estimate for lower bounds of
positive definite integral polynomials in simplex,
\cite{jeronimoa:1} improves the estimate.

\begin{lemma}[\cite{jeronimoa:1}] \label{lemma:lowerbound}
Suppose $f \in \mathbb{Z}[x_1,x_2,\ldots,x_n]$ is positive definite in $\widetilde{\Delta}_{n}$.
If the degree of $f$ is $d$,
and absolute values of its coefficients do not exceed $M$,
then
\begin{equation}
\min\limits_{\widetilde{\Delta}_{n}} f \geq (2M)^{-d^{n+1}} d^{-(n+1)d^{n+1}}.
\end{equation}
\end{lemma}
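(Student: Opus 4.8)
The plan is to reduce the statement to a separation bound for a real algebraic number. Since $\widetilde{\Delta}_{n}$ is compact and $f>0$ on it, the value $\mu:=\min_{\widetilde{\Delta}_{n}} f$ is attained at some point $p\in\widetilde{\Delta}_{n}$ and is a positive real number; the goal is to show that $\mu$ is a nonzero real root of an \emph{explicit} univariate polynomial with integer coefficients whose degree and height are bounded in terms of $n,d,M$, and then to apply a Cauchy-type lower bound for the modulus of such a root.

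First I would pin down where the minimizer can sit. Stratify $\widetilde{\Delta}_{n}$ into its relatively open faces: a face $F_{I,\varepsilon}$ is obtained by choosing $I\subseteq\{1,\dots,n\}$, setting $x_i=0$ for $i\in I$, and either leaving $\sum_i x_i<1$ (if $\varepsilon=0$) or imposing $\sum_i x_i=1$ (if $\varepsilon=1$). The minimizer $p$ lies in the relative interior of exactly one such $F_{I,\varepsilon}$, where it is an \emph{unconstrained} local minimum of the restriction of $f$ to the affine span of that face; hence $p$ satisfies the corresponding first-order conditions, namely $\partial f/\partial x_j=\lambda$ for $j\notin I$ (with a Lagrange multiplier $\lambda$ if $\varepsilon=1$, and $\lambda=0$ if $\varepsilon=0$), together with $x_i=0$ for $i\in I$ and, if $\varepsilon=1$, $\sum_i x_i=1$. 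Adjoining the equation $z=f(x)$, the value $\mu$ becomes the $z$-coordinate of a solution of a polynomial system $\Sigma_{I,\varepsilon}$ in at most $n+2$ variables whose defining polynomials have degree $\le d$ and integer coefficients of absolute value $\le dM$.

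Next I would invoke effective elimination. After a generic rational perturbation of $f$ (if needed to make $\Sigma_{I,\varepsilon}$ zero-dimensional, controlling the resulting change in $\mu$), B\'ezout's theorem bounds the number of complex solutions of $\Sigma_{I,\varepsilon}$ by $d^{O(n)}$, and an arithmetic Nullstellensatz / height-of-resultant estimate bounds the bit-size of the coefficients of a univariate eliminating polynomial $g_{I,\varepsilon}(z)\in\mathbb{Z}[z]\setminus\{0\}$ with $g_{I,\varepsilon}(\mu)=0$; this is exactly where the exponents $d^{n+1}$ and $(n+1)d^{n+1}$ of the statement originate, the first from the B\'ezout count of critical configurations and the second from the way heights compound through the projection. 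Since $\mu\ne0$ we may divide $g_{I,\varepsilon}$ by the largest power of $z$ dividing it and still retain $\mu$ as a root; writing the result as $\sum_k a_k z^k$ with $a_k\in\mathbb{Z}$ and $a_0\ne0$, the Cauchy bound gives $|\mu|\ge |a_0|/\bigl(|a_0|+\max_{k\ge1}|a_k|\bigr)\ge 1/\bigl(1+\max_k|a_k|\bigr)$, and inserting the height estimate and minimizing over the finitely many faces $F_{I,\varepsilon}$ yields $\mu\ge(2M)^{-d^{n+1}} d^{-(n+1)d^{n+1}}$.

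The main obstacle is the effective elimination step: controlling how the height of the minimal polynomial of the critical value $\mu$ grows when one projects a zero-dimensional system of roughly $n$ degree-$d$ equations, and extracting the sharp exponents rather than much cruder ones. This demands a careful marriage of B\'ezout counting with arithmetic Nullstellensatz-type height bounds, plus uniform bookkeeping over all boundary faces and a perturbation argument to guarantee zero-dimensionality — precisely the technical content of \cite{jeronimoa:1}, which is why we cite that paper rather than reproducing the estimate here.
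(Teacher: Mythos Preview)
The paper does not prove this lemma at all: it is stated with attribution to \cite{jeronimoa:1} and immediately used, with no argument supplied. Your proposal is therefore not competing with any proof in the present paper; you are sketching the content of the cited reference, and you say as much in your last sentence.

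As a sketch of the Jeronimo--Perrucci argument your outline is broadly faithful: locate the minimizer on a face of $\widetilde{\Delta}_n$, write down the KKT system together with $z=f(x)$, eliminate to a univariate polynomial in $z$ with controlled degree (via a B\'ezout count) and controlled height (via an arithmetic Nullstellensatz / resultant estimate), and finish with a Cauchy-type root bound. The one place where your sketch is genuinely soft is the ``generic rational perturbation of $f$'' to force zero-dimensionality while ``controlling the resulting change in $\mu$'': perturbing $f$ moves $\mu$, so one cannot simply perturb and then bound the perturbed minimum. The actual argument in \cite{jeronimoa:1} handles this via a symbolic deformation and a limit, not an ad hoc numerical perturbation, and making that step precise is exactly where the sharp exponents $d^{n+1}$ and $(n+1)d^{n+1}$ come from. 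Since the present paper only \emph{quotes} the lemma, your concluding deferral to \cite{jeronimoa:1} matches what the paper itself does.
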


Indeed, the deduction in \cite{jeronimoa:1} has proved the following more general result.

\begin{lemma}[\cite{jeronimoa:1}] \label{lemma:lowerboundg}
Suppose the minimum of $f \in \mathbb{Z}[x_1,x_2,\ldots,x_n]$ in $\widetilde{\Delta}_{n}$ is not zero.
If the degree of $f$ is $d$,
and absolute values of its coefficients do not exceed $M$,
then
\begin{equation}
\left| \min\limits_{\widetilde{\Delta}_{n}} f \right| \geq (2M)^{-d^{n+1}} d^{-(n+1)d^{n+1}}.
\end{equation}
\end{lemma}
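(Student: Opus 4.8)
\medskip
\noindent\textbf{Proof sketch.}
The plan is to reuse the argument of \cite{jeronimoa:1} underlying Lemma~\ref{lemma:lowerbound}, noting that its positive-definiteness hypothesis is invoked only to ensure $\min_{\widetilde{\Delta}_{n}} f \neq 0$; so it suffices to re-examine that argument under the weaker assumption. Write $\mu = \min_{\widetilde{\Delta}_{n}} f$, attained at some $p \in \widetilde{\Delta}_{n}$ by compactness. If $\mu > 0$ then $f > 0$ on all of $\widetilde{\Delta}_{n}\setminus\{\mathbf{0}\}$, so $f$ is PD in $\widetilde{\Delta}_{n}$ and Lemma~\ref{lemma:lowerbound} applies verbatim. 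The remaining case is $\mu < 0$, and the goal is to realize $\mu$ as a nonzero root of an explicitly bounded univariate integer polynomial and then apply a root-separation estimate.

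First I would locate $p$ in the relative interior of a unique face $F$ of $\widetilde{\Delta}_{n}$, cut out by some of the integer linear equations $x_i = 0$ and possibly $\sum_j x_j = 1$. If $F$ is a vertex then $\mu = f(p) \in \mathbb{Z}$, so $\mu \neq 0$ already gives $|\mu| \geq 1$, far above the claimed bound. Otherwise, substituting the defining equations of $F$ into $f$ produces a polynomial $\widetilde f$ in fewer variables, with integer coefficients, degree $\leq d$, and height bounded by $M$ times a factor depending only on $n$ and $d$; since $p$ is interior to $F$, it is an ordinary critical point, $\nabla \widetilde f(p) = \mathbf{0}$, with $\mu = \widetilde f(p)$.

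Next I would eliminate the space variables from the system $\{\nabla \widetilde f = \mathbf{0},\ t = \widetilde f\}$ to obtain a nonzero $P \in \mathbb{Z}[t]$ with $P(\mu) = 0$, controlling $\deg P$ and $\|P\|_\infty$ through B\'ezout-type degree bounds and the standard height bounds for resultants and iterated elimination; running this over all (finitely many) faces $F$ and taking $P$ to be the product of the resulting polynomials is exactly the bookkeeping that yields the exponents $d^{n+1}$ and $(2M)^{d^{n+1}}$ in the statement. Finally, since $\mu \neq 0$, a root-separation bound for nonzero roots of integer polynomials — e.g. $|\mu| \geq (1+\|P\|_\infty)^{-1}$, or a sharper Mahler-measure version — together with the above size estimates delivers $|\mu| \geq (2M)^{-d^{n+1}} d^{-(n+1)d^{n+1}}$.

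The main obstacle is the elimination step when the critical locus $\{\nabla \widetilde f = \mathbf{0}\}$ is positive-dimensional: one cannot bound a resultant of such a system directly, even though the set of critical values is itself finite (because $\widetilde f$ is constant on each connected component of its critical set). The fix, following \cite{jeronimoa:1}, is to introduce a generic infinitesimal deformation — or pass to polar varieties — isolating on each component a smooth point that attains the same value while keeping the degree and height growth within the stated bounds; one must also make sure that every face, including the low-dimensional ones, is accounted for so that $\mu$ is genuinely a root of the final $P$.
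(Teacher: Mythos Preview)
Your proposal matches the paper's approach exactly: the paper does not give an independent proof of this lemma but simply remarks that ``the deduction in \cite{jeronimoa:1} has proved the following more general result,'' i.e.\ that the argument behind Lemma~\ref{lemma:lowerbound} uses positive-definiteness only to guarantee $\min_{\widetilde{\Delta}_n} f \neq 0$. Your opening sentence says precisely this; the remainder of your sketch (faces, critical points, elimination, root bounds, handling of positive-dimensional critical loci) is a reasonable outline of the argument in \cite{jeronimoa:1}, which the present paper does not reproduce.
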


We have the following result for integral forms in $\Delta_{n}$.
\begin{lemma} \label{lemma:lowerboundf}
Suppose the minimum of $f \in \mathbb{Z}[x_1,x_2,\ldots,x_n]$ in $\Delta_{n}$ is not zero.
If the degree of $f$ is $d$,
and absolute values of its coefficients do not exceed $M$,
then
\begin{equation}
\left| \min\limits_{\Delta_n} f \right| \geq C_1(M,n,d).
\end{equation}
where $C_1(M,n,d) = (2M)^{-d^{n}} n^{-d^{n+1}-d} d^{-n d^{n}}$.
\end{lemma}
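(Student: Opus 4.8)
The plan is to reduce Lemma~\ref{lemma:lowerboundf} to Lemma~\ref{lemma:lowerboundg} in one fewer variable, using the affine substitution that eliminates $x_n$ on the simplex. Since $\Delta_n$ is the graph of $x_n = 1 - x_1 - \cdots - x_{n-1}$ over the region $\widetilde{\Delta}_{n-1} = \{(x_1,\ldots,x_{n-1})^T \in \mathbb{R}_+^{n-1} : x_1 + \cdots + x_{n-1} \le 1\}$, I would set
\[
g(x_1,\ldots,x_{n-1}) = f(x_1,\ldots,x_{n-1},\,1 - x_1 - \cdots - x_{n-1}).
\]
Then $g \in \mathbb{Z}[x_1,\ldots,x_{n-1}]$, $\deg g \le d$, and $\min_{\Delta_n} f = \min_{\widetilde{\Delta}_{n-1}} g$ because $(x_1,\ldots,x_{n-1}) \mapsto (x_1,\ldots,x_{n-1},\,1-\sum_{i<n}x_i)$ is a homeomorphism of $\widetilde{\Delta}_{n-1}$ onto $\Delta_n$ with $g = f$ composed with it. In particular this minimum is nonzero, so Lemma~\ref{lemma:lowerboundg} applies to $g$. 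The degenerate cases $n=1$ and $\deg g = 0$ are immediate: there the restriction of $f$ to the simplex (resp.\ $g$) is a nonzero integer, of absolute value $\ge 1 \ge C_1(M,n,d)$.

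The one quantitative point is to bound $M' := \max_\gamma |(\text{coefficient of }x^\gamma\text{ in }g)|$ by $M n^d$. Expanding each term $c_\alpha x_1^{\alpha_1}\cdots x_{n-1}^{\alpha_{n-1}}(1 - x_1 - \cdots - x_{n-1})^{\alpha_n}$ by the binomial and multinomial theorems and collecting the coefficient of a monomial $x^\beta$ with $|\beta| = p \le d$, one gets a signed sum of terms $c_\alpha \binom{d-s}{p-s}\binom{p-s}{\beta - \bar\alpha}$ over multi-indices $\bar\alpha = (\alpha_1,\ldots,\alpha_{n-1}) \le \beta$, where $s = |\bar\alpha|$ and $\alpha_n = d - s$. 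Taking absolute values, grouping by the value of $s$, and using $\sum_{|\delta|=p-s}\binom{p-s}{\delta} = (n-1)^{p-s}$, the coefficient of $x^\beta$ is at most
\[
M\sum_{s=0}^{p}\binom{d-s}{p-s}(n-1)^{p-s} = M\sum_{j=0}^{p}\binom{d-p+j}{j}(n-1)^{j} \le M\sum_{j=0}^{d}\binom{d}{j}(n-1)^{j} = M n^d ,
\]
where the last inequality uses the monotonicity $\binom{d-p+j}{j}\le\binom{d}{j}$ (valid since $d-p+j\le d$). Also $M' \ge 1$ as $g \not\equiv 0$.

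To finish, I would apply Lemma~\ref{lemma:lowerboundg} to $g$ in $n-1 \ge 1$ variables, of degree $d' := \deg g$ with $1 \le d' \le d$, and coefficient bound $M'$, obtaining $\bigl|\min_{\widetilde{\Delta}_{n-1}} g\bigr| \ge (2M')^{-(d')^n}(d')^{-n(d')^n}$; weakening the exponents via $2M'\ge 1$ and $1 \le d' \le d$, and then inserting $M' \le M n^d$, gives
\[
\Bigl|\min_{\Delta_n} f\Bigr| = \Bigl|\min_{\widetilde{\Delta}_{n-1}} g\Bigr| \ge (2M')^{-d^n} d^{-n d^n} \ge (2 M n^d)^{-d^n} d^{-n d^n} = (2M)^{-d^n} n^{-d^{n+1}} d^{-n d^n},
\]
which is $\ge (2M)^{-d^n} n^{-d^{n+1}-d} d^{-n d^n} = C_1(M,n,d)$. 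The main obstacle is exactly this coefficient estimate: a naive bound on the $\ell^1$\nobreakdash-norm of $g$ inserts an extra factor $\binom{n+d-1}{d}$ that the stated constant cannot absorb, so one must keep the factors $\binom{d-s}{p-s}$ together and exploit $\binom{d-p+j}{j}\le\binom{d}{j}$ to land precisely on $M' \le M n^d$; everything else is monotonicity bookkeeping on the exponents of $2M$, $n$, and $d$.
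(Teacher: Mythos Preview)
Your argument is correct, but it proceeds along a genuinely different line from the paper's.  The paper exploits the homogeneity of $f$: picking a minimizer $(x_{1,0},\dots,x_{n,0})\in\Delta_n$ with, say, $x_{n,0}\ge 1/n$, it rescales by $1/(n x_{n,0})$ so that the last coordinate becomes $1/n$, and then sets $g(x_1,\dots,x_{n-1})=n^d f(x_1,\dots,x_{n-1},1/n)=f(nx_1,\dots,nx_{n-1},1)$.  This makes the coefficient bound $M'\le n^d M$ immediate (each coefficient of $f$ is multiplied by a power of $n$ at most $n^d$), at the cost of an extra factor $n^{-d}$ coming from $f(\,\cdot\,,1/n)=n^{-d}g(\,\cdot\,)$; that factor is exactly the $n^{-d}$ appearing in $C_1$.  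By contrast, you parametrize $\Delta_n$ affinely via $x_n=1-\sum_{i<n}x_i$ and take $g(x)=f(x,1-\sum x_i)$.  This avoids the rescaling loss, so you in fact prove the slightly stronger bound $(2M)^{-d^n}n^{-d^{n+1}}d^{-nd^n}\ge C_1(M,n,d)$, but you pay for it with the combinatorial estimate on the coefficients of $g$; your key step $\binom{d-p+j}{j}\le\binom{d}{j}$ (valid for $0\le j\le p\le d$) is exactly what collapses the sum to $Mn^d$.  A further small bonus of your route is that it does not rely on $f$ being homogeneous, only on $\deg f\le d$, whereas the paper's scaling step does.
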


\begin{proof}
Let $(x_{1,0},\ldots,x_{n,0})$ be a minimal point of $f$ in $\Delta_n$,
then $x_{j,0} \geq \dfrac{1}{n}, 1 \leq j \leq n$, we can suppose $x_{n,0} \geq \dfrac{1}{n}$
without loss of generality.
Thus
\begin{equation}
\begin{split}
\left| \min\limits_{\Delta_n} f \right| = & \left| f(x_{1,0},\ldots,x_{n,0}) \right| \\
= & (n x_{n,0})^d
\left| f \left( \dfrac{x_{1,0}}{n x_{n,0}}, \ldots, \dfrac{x_{n-1,0}}{n x_{n,0}}, \dfrac{1}{n} \right) \right| \\
\geq & \left| f \left( \dfrac{x_{1,0}}{n x_{n,0}}, \ldots, \dfrac{x_{n-1,0}}{n x_{n,0}}, \dfrac{1}{n} \right) \right|.
\end{split} \nonumber
\end{equation}

Let
\begin{equation}
\begin{split}
g(x_1,\ldots,x_{n-1}) = & n^d f(x_1,\ldots,x_{n-1},\frac{1}{n})\\
= & f(n x_1, \ldots, n x_{n-1}, 1),
\end{split} \nonumber
\end{equation}
then its minimum is not zero in $\widetilde{\Delta}_{n-1}$.
Degree of $g \in \mathbb{Z}[x_1,\ldots,x_{n-1}]$ is $d$,
and absolute values of its coefficients do not exceed $n^d M$,
so from Lemma \ref{lemma:lowerboundg}, we have
$$
\left| \min\limits_{\widetilde{\Delta}_{n-1}} g \right| \geq (2M)^{-d^{n}} n^{-d^{n+1}} d^{-n d^{n}}.
$$
Since
$$
\left( \dfrac{x_{1,0}}{n x_{n,0}}, \ldots, \dfrac{x_{n-1,0}}{n x_{n,0}} \right) \in \widetilde{\Delta}_{n-1},
$$
we have
\begin{equation}
\begin{split}
& \left| f \left( \dfrac{x_{1,0}}{n x_{n,0}}, \ldots, \dfrac{x_{n-1,0}}{n x_{n,0}}, \dfrac{1}{n} \right) \right| \\
\geq & n^{-d} (2M)^{-d^{n}} n^{-d^{n+1}} d^{-n d^{n}} \\
= & (2M)^{-d^{n}} n^{-d^{n+1}-d} d^{-n d^{n}}.
\end{split} \nonumber
\end{equation}
Therefore
$$
\left| \min\limits_{\Delta_n} f \right| \geq (2M)^{-d^{n}} n^{-d^{n+1}-d} d^{-n d^{n}}.
$$
\end{proof}

\section{WDS and barycentric subdivision}
In the $\Delta_{n}$ simplex coordinate system, considering a WDS
\begin{equation} \label{eqn:wds}
\mathbf{x} = T_n \mathbf{y},
\end{equation}
we can see that $\mathbf{a}_1=(1,0,...,0)^T$ is transformed to $(1,0,\ldots,0)^T$,
$\mathbf{a}_2=(\frac{1}{2},\frac{1}{2},\ldots,0)^T$ is transformed to $(0,1,\ldots,0)^T$,
$\ldots$,
and $\mathbf{a}_n=(\frac{1}{n},\frac{1}{n},\ldots,\frac{1}{n})^T$ is transformed to $(0,0,\ldots,1)^T$.
Moreover, $\mathbf{a}_k (k=1,2,\cdots,n)$ is the barycenter of the $(k-1)$-dimensional proper face containing
$\mathbf{a}_1 \mathbf{a}_2 \cdots \mathbf{a}_k$ in $\Delta_{n}$.
Since \eqref{eqn:wds} is a normal substitution,
from Lemma \ref{normalmatrix} we know, after transform \eqref{eqn:wds},
the corresponding point for any $(x_1,x_2,\cdots,x_n)^T \in \Delta_{n}$ satisfies
$\sum\limits_{i=1}^n y_i = 1$,
that is, coordinates after transforms are also normal.
So, $\mathbf{a}_1 \mathbf{a}_2 \cdots \mathbf{a}_n$ is a subsimplex of $\Delta_{n}$
after the first barycentric subdivision,
it corresponds a WDS matrix $T_n=(\mathbf{a}_1 \ \mathbf{a}_2 \ \cdots \ \mathbf{a}_n)$.

Similarly, other $n!-1$ WDS matrices respectively correspond other $n!-1$ subsimplexes of $\Delta_{n}$
after the first barycentric subdivision.
Thus, from geometrical views, a WDS corresponds a barycentric subdivision of $\Delta_{n}$.

From Lemma \ref{normalmatrix} and the definition of WDS, we know that
sequential WDS correspond sequential barycentric subdivisions of $\Delta_n$.

Denote by $\diam \sigma$ the dimension of simplex $\sigma$, i.e.,
maximal distance between vertexes of $\sigma$.
Comparing with the dimension of original simplex,
dimensions of subsimplexes in barycentric subdivision decrease. That is

\begin{lemma}[\cite{spanier:1}] \label{lemma:diam}
Let $\sigma$ be an $n$-dimensional simplex, $\sigma'$ is a subsimplex in
the barycentric subdivision of $\sigma$,
then
\begin{equation}
\diam \sigma' \leq \dfrac{n}{n+1} \diam \sigma.
\end{equation}
\end{lemma}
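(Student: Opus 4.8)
The plan is to reduce the statement to a single estimate — the distance from an arbitrary point of a simplex to the barycenter of that simplex — and then to feed this into the combinatorial description of the barycentric subdivision. Recall that every simplex $\sigma'$ of the barycentric subdivision of $\sigma$ has the form $\sigma' = [\,\hat{\tau}_0,\hat{\tau}_1,\ldots,\hat{\tau}_k\,]$, where $\tau_0 \subsetneq \tau_1 \subsetneq \cdots \subsetneq \tau_k$ is a strictly increasing chain of nonempty faces of $\sigma$ and $\hat{\tau}$ denotes the barycenter of a face $\tau$ (see \cite{spanier:1}). Since $\diam \sigma'$ is, by the definition used here, the largest distance between two vertices of $\sigma'$, it suffices to bound $\lvert \hat{\tau}_i - \hat{\tau}_j \rvert$ for all $i<j$.

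First I would prove the auxiliary fact: if $\tau$ is a $p$-dimensional simplex with vertices $v_0,\ldots,v_p$ and barycenter $\hat{\tau} = \frac{1}{p+1}\sum_{r=0}^{p} v_r$, then $\lvert x - \hat{\tau}\rvert \le \frac{p}{p+1}\diam\tau$ for every $x \in \tau$. The map $x \mapsto \lvert x - \hat{\tau}\rvert$ is convex on $\tau$ (a norm composed with an affine map), hence it attains its maximum over the polytope $\tau$ at a vertex; so it is enough to check $\lvert v_s - \hat{\tau}\rvert \le \frac{p}{p+1}\diam\tau$ for each $s$. This is a direct computation: $v_s - \hat{\tau} = \frac{1}{p+1}\sum_{r\neq s}(v_s - v_r)$, so $\lvert v_s - \hat{\tau}\rvert \le \frac{1}{p+1}\sum_{r\neq s}\lvert v_s - v_r\rvert \le \frac{p}{p+1}\diam\tau$, there being $p$ summands each at most $\diam\tau$.

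Now take $i<j$. By the chain condition $\tau_i \subseteq \tau_j$, so $\hat{\tau}_i$, being a convex combination of vertices of $\tau_i$, lies in $\tau_j$. Applying the auxiliary fact with $\tau = \tau_j$, $x = \hat{\tau}_i$, and $p_j = \dim\tau_j$, we get $\lvert \hat{\tau}_i - \hat{\tau}_j\rvert \le \frac{p_j}{p_j+1}\diam\tau_j$. Since $\tau_j$ is a face of the $n$-simplex $\sigma$, we have $p_j \le n$ and $\diam\tau_j \le \diam\sigma$; and because $t\mapsto \frac{t}{t+1}$ is increasing, $\frac{p_j}{p_j+1}\le \frac{n}{n+1}$. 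Hence $\lvert \hat{\tau}_i - \hat{\tau}_j\rvert \le \frac{n}{n+1}\diam\sigma$, and taking the maximum over all pairs of vertices of $\sigma'$ yields $\diam\sigma' \le \frac{n}{n+1}\diam\sigma$.

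The only substantive point is the auxiliary estimate, and the one place to be careful is the reduction of $\max_{x\in\tau}\lvert x-\hat{\tau}\rvert$ to the vertices, i.e. the fact that a convex function on a simplex attains its maximum at a vertex (true since $\phi(\sum_r\lambda_r v_r)\le\sum_r\lambda_r\phi(v_r)\le\max_r\phi(v_r)$). If one prefers to avoid this, the classical alternative of \cite{spanier:1} is induction on $n$: bound the distance from $\hat{\sigma}$ to a point of a proper face, and combine it with the inductive hypothesis applied to the subdivision of that face; this produces the same constant $\frac{n}{n+1}$.
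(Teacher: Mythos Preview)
Your argument is correct and is precisely the classical proof from \cite{spanier:1}; the paper itself does not supply a proof of this lemma but merely cites that reference, so there is nothing further to compare.
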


\section{Proof of the main result}

\begin{proof}[Proof of the Theorem \ref{thm:poster}]
We will prove two propositions respectively.

(\uppercase\expandafter{\romannumeral 1})
Sufficiency is obvious.
Now we suppose $f$ is positive definite in $\Delta_n$.

Let
$$f(x_1,\ldots,x_n)=\sum\limits_{i_1+\ldots+i_n=d}c_{i_1 \cdots i_n}x_1^{i_1} \cdots x_n^{i_n},$$
choose an arbitrary $m$-th WDS for $f$
\begin{equation} \label{eqn:subs}
\left\{\begin{array}{l}
x_1  =  \alpha_{1}y_1+(\alpha_{1}+\beta_{12})y_2+\ldots+(\alpha_{1}+\beta_{1n})y_n\\
x_2  =  \alpha_{2}y_1+(\alpha_{2}+\beta_{22})y_2+\ldots+(\alpha_{2}+\beta_{2n})y_n\\
\dotfill\\
x_n  =  \alpha_{n}y_1+(\alpha_{n}+\beta_{n2})y_2+\ldots+(\alpha_{n}+\beta_{nn})y_n
\end{array}\right.,
\end{equation}
where $\alpha_i, \alpha_i+\beta_{ij} \geq 0, i=1,\ldots,n, j=2,\ldots,n$.
Since a WDS is a normal substitution,
\begin{equation} \label{eqn:alpha}
\sum\limits_{i=1}^n\alpha_i =1, \quad \sum\limits_{i=1}^n\beta_{ij} = 0, j=2,\ldots,n.
\end{equation}
From Lemma \ref{lemma:diam}, we have
$$
\sqrt{\sum\limits_{i=1}^n\beta_{ij}^2} \leq \left( \dfrac{n-1}{n} \right)^m,\quad j=2,\ldots,n,
$$
Further more,
\begin{equation} \label{eqn:betagamma}
\left| \beta_{ij} \right| \leq \left( \dfrac{n-1}{n} \right)^m,\quad
i=1,\ldots,n, j=2,\ldots,n.
\end{equation}

Let $u=y_1+\ldots+y_n$, then \eqref{eqn:subs} can be written as
\begin{equation}
\left\{\begin{array}{l}
x_1  =  \alpha_{1} u + \beta_{12} y_2 +\ldots + \beta_{1n} y_n\\
x_2  =  \alpha_{2} u + \beta_{22} y_2 +\ldots + \beta_{2n} y_n\\
\dotfill\\
x_n  =  \alpha_{n} u + \beta_{n2} y_2 +\ldots + \beta_{nn} y_n
\end{array}\right.. \nonumber
\end{equation}

So
\begin{equation}
\begin{split}
& f(x_1,\ldots,x_n)\\
=\ & f(\alpha_{1} u + \beta_{12} y_2 +\ldots + \beta_{1n} y_n,
\ldots,
\alpha_{n} u + \beta_{n2} y_2 +\ldots + \beta_{nn} y_n)\\
=\ &
\sum\limits_{i_1+\ldots+i_n=d}
c_{i_1 \cdots i_n}
(\alpha_{1} u + \beta_{12} y_2 +\ldots + \beta_{1n} y_n)^{i_1}
\cdots
(\alpha_{n} u + \beta_{n2} y_2 +\ldots + \beta_{nn} y_n)^{i_n} \\
=\ &
\sum\limits_{i_1+\ldots+i_n=d}
c_{i_1 \cdots i_n}
\alpha_{1}^{i_1} \cdots \alpha_{n}^{i_n} u^d +
\sum\limits_{i_1+\ldots+i_n=d}
\phi_{i_1 \cdots i_n}(\alpha_{1},\ldots,\beta_{nn})
y_1^{i_1} \cdots y_n^{i_n} \\
=\ &
f(\alpha_{1},\ldots,\alpha_{n}) u^d +
\sum\limits_{i_1+\ldots+i_n=d}
\phi_{i_1 \cdots i_n}(\alpha_{1},\ldots,\beta_{nn})
y_1^{i_1} \cdots y_n^{i_n} \\
=\ &
\sum\limits_{i_1+\ldots+i_n=d}
\left(
\frac{d!}{i_1! \cdots i_n!}
f(\alpha_{1},\ldots,\alpha_{n})+
\phi_{i_1 \cdots i_n}(\alpha_{1},\ldots,\beta_{nn})
\right)
y_1^{i_1} \cdots y_n^{i_n} \\
=\ &
\sum\limits_{i_1+\ldots+i_n=d}
\widetilde{c}_{i_1 \cdots i_n}
y_1^{i_1} \cdots y_n^{i_n},
\end{split}
\end{equation}
where
\begin{equation} \label{eqn:cijk}
\begin{split}
\widetilde{c}_{i_1 \cdots i_n}=
\frac{d!}{i_1! \cdots i_n!}
f(\alpha_{1},\ldots,\alpha_{n})+
\phi_{i_1 \cdots i_n}(\alpha_{1},\ldots,\beta_{nn}).
\end{split}
\end{equation}
and
\begin{equation}
\begin{split}
& \sum\limits_{i_1+\ldots+i_n=d}
\phi_{i_1 \cdots i_n}(\alpha_{1},\ldots,\beta_{nn})
y_1^{i_1} \cdots y_n^{i_n} \\
= &
\sum\limits_{i_1+\ldots+i_n=d \atop j_{11}+\ldots+j_{n1} \neq d}
c_{i_1 \cdots i_n}
\left(
\sum\limits_{j_{11}+\ldots+j_{1n}=i_1}
\frac{i_1!}{j_{11}! \cdots j_{1n}!}
\alpha_1^{j_{11}}\beta_{12}^{j_{12}} \cdots \beta_{1n}^{j_{1n}}
u^{j_{11}} y_2^{j_{12}} \cdots y_n^{j_{1n}}
\right) \cdots \\
& \left(
\sum\limits_{j_{n1}+\ldots+j_{nn}=i_n}
\frac{i_n!}{j_{n1}! \cdots j_{nn}!}
\alpha_n^{j_{n1}}\beta_{n2}^{j_{n2}} \cdots \beta_{nn}^{j_{nn}}
u^{j_{n1}} y_2^{j_{n2}} \cdots y_n^{j_{nn}}
\right) \\
= &
\sum\limits_{i_1+\ldots+i_n=d \atop j_{11}+\ldots+j_{n1} \neq d}
c_{i_1 \cdots i_n}
\sum\limits_{\tiny \begin{array}{c} j_{11}+\ldots+j_{1n}=i_1\\ \dotfill \\j_{n1}+\ldots+j_{nn}=i_n \end{array}}
\frac{ \prod_{k=1}^n i_k! }{ \prod_{p,q=1}^n j_{pq}!}
\prod_{k=1}^n \alpha_k^{j_{k1}}
\prod_{p=1}^n \prod_{q=2}^n \beta_{pq}^{j_{pq}}
\cdot \\
&
u^{j_{11}+\ldots+j_{n1}} y_2^{j_{12}+\ldots+j_{n2}} \cdots y_n^{j_{1n}+\ldots+j_{nn}} \\
= &
\sum\limits_{i_1+\ldots+i_n=d \atop j_{11}+\ldots+j_{n1} \neq d}
c_{i_1 \cdots i_n}
\sum\limits_{\tiny \begin{array}{c} j_{11}+\ldots+j_{1n}=i_1\\ \dotfill \\j_{n1}+\ldots+j_{nn}=i_n \end{array}}
\frac{ \prod_{k=1}^n i_k! }{ \prod_{p,q=1}^n j_{pq}!}
\prod_{k=1}^n \alpha_k^{j_{k1}}
\prod_{p=1}^n \prod_{q=2}^n \beta_{pq}^{j_{pq}}
\cdot \\
& \sum\limits_{s_1+\ldots+s_n=j_{11}+\ldots+j_{n1}}
\frac{(j_{11}+\ldots+j_{n1})!}{s_1! \cdots s_n!}
y_1^{s_1} y_2^{j_{12}+\ldots+j_{n2}+s_2} \cdots y_n^{j_{1n}+\ldots+j_{nn}+s_n},
\end{split} \nonumber
\end{equation}
from \eqref{eqn:alpha} and \eqref{eqn:betagamma}, we have
\begin{equation}
\begin{split}
& \left|
c_{i_1 \cdots i_n}
\frac{ \prod_{k=1}^n i_k! }{ \prod_{p,q=1}^n j_{pq}!}
\prod_{k=1}^n \alpha_k^{j_{k1}}
\prod_{p=1}^n \prod_{q=2}^n \beta_{pq}^{j_{pq}}
\frac{(j_{11}+\ldots+j_{n1})!}{s_1! \cdots s_n!}
\right| \\
\leq & M (d!)^{n+1} \left(\frac{n-1}{n}\right)^{m \sum_{p=1}^n \sum_{q=2}^n j_{pq}}\\
= & M (d!)^{n+1} \left(\frac{n-1}{n}\right)^{m(d-\sum_{p=1}^n j_{p1})}\\
\leq & M (d!)^{n+1} \left(\frac{n-1}{n}\right)^m\\
\leq & M d^{(n+1)d} \left(\frac{n-1}{n}\right)^m.
\end{split} \nonumber
\end{equation}
There are
$$
{d+n-1 \choose n-1} \leq (d+1)^{n-1}
$$
nonnegative integer tuples $(i_1,\ldots,i_n)$ satisfying $i_1+\ldots+i_n=d$,
so $\phi_{i_1 \cdots i_n}(\alpha_{1},\ldots,\beta_{nn})$ is summed up by terms whose number does not exceed
$$
(d+1)^{(n-1)(n+2)},
$$
and the absolute value of each term does not exceed
$$
M d^{(n+1)d} \left(\frac{n-1}{n}\right)^m.
$$
Therefore
\begin{equation} \label{eqn:boundphi}
\left|
\phi_{i_1 \cdots i_n}(\alpha_{1},\ldots,\beta_{nn}))
\right|
\leq
M d^{(n+1)d} (d+1)^{(n-1)(n+2)} \left(\frac{n-1}{n}\right)^m.
\end{equation}

From Lemma \ref{lemma:lowerboundf}, we have
\begin{equation} \label{eqn:boundf}
\frac{d!}{i_1! \cdots i_n!} f(\alpha_{1},\ldots,\alpha_{n}) \geq C_1(M,n,d).
\end{equation}

From \eqref{eqn:cijk}, \eqref{eqn:boundphi}, \eqref{eqn:boundf}, we know that
in order that $\widetilde{c}_{ijk} > 0$, it suffices
$$
C_1(M,n,d) > M d^{(n+1)d} (d+1)^{(n-1)(n+2)} \left(\frac{n-1}{n}\right)^m.
$$
That is
\begin{equation} \label{eqn:c2}
m > \dfrac{ \ln \left( 2^{d^n} M^{d^n+1} n^{d^{n+1}+d} d^{(n+1)d+n d^n} (d+1)^{(n-1)(n+2)} \right)}
{\ln n - \ln (n-1)}.
\end{equation}

(\uppercase\expandafter{\romannumeral 2})
Sufficiency is also obvious.
Now we suppose the minimum of $f$ in $\Delta_n$ is negative,
and a minimal point is $(a_1,\ldots,a_n)$.
From Lemma \ref{lemma:lowerboundf}, we have
$$
|f(a_1,\ldots,a_n)| \geq C_1(M,n,d).
$$

Suppose $(y_{11},\ldots,y_{1n})^T, (y_{21},\ldots,y_{2n})^T \in \Delta_n$,
$(x_{11},\ldots,x_{1n})^T, (x_{21},\ldots,x_{2n})^T$ are coordinates satisfying \eqref{eqn:subs},
from Lemma \ref{normalmatrix}, we have $(x_{11},\ldots,x_{1n})^T, (x_{21},\ldots,x_{2n})^T \in \Delta_n$.
From the correspondence of WDS and barycentric subdivisions, we have
\begin{equation}
\sqrt{(x_{11}-x_{21})^2+\ldots+(x_{1n}-x_{2n})^2} \leq \left(\frac{n-1}{n}\right)^m.\nonumber
\end{equation}
Let $\delta_j = x_{1j} - x_{2j}, j = 1,\ldots,n$, then there exists ~$\gamma \in (0,1)$, such that
\begin{equation}
\begin{split}
& | f(x_{11},\ldots,x_{1n}) - f(x_{21},\ldots,x_{2n}) | \\
= & \left|
\left( \delta_1 \frac{\partial }{\partial x_1} + \ldots
+ \delta_n \frac{\partial }{\partial x_n} \right)
f(x_{21}+ \gamma \delta_1, \ldots, x_{2n} + \gamma \delta_n)
\right| \\
\leq &
\sum\limits_{j=1}^n |\delta_j|
\left|
\frac{\partial }{\partial x_j}
f(x_{21}+ \gamma \delta_1, \ldots, x_{2n} + \gamma \delta_n)
\right|\\
\leq &
\left(\frac{n-1}{n}\right)^m
\sum\limits_{j=1}^n
\sum\limits_{i_1+\ldots+i_n=d}
\left|
c_{i_1 \cdots i_n} (x_{21}+ \gamma \delta_1)^{i_1} \cdots
(x_{2j}+ \gamma \delta_j)^{i_j-1} \cdots
(x_{2n} + \gamma \delta_n)^{i_n}
\right| \\
\leq &
\left(\frac{n-1}{n}\right)^m
\sum\limits_{j=1}^n
\sum\limits_{i_1+\ldots+i_n=d}
M 2^{d-1} \\
= &
n M 2^{d-1} {d+n-1 \choose n-1} \left( \frac{n-1}{n} \right)^m \\
\leq &
n M 2^{d-1} (d+1)^{n-1} \left( \frac{n-1}{n} \right)^m.
\end{split}
\end{equation}
Thus if $m$ is sufficiently large, such that
$$
n M 2^{d-1} (d+1)^{n-1} \left( \frac{n-1}{n} \right)^m \leq
\frac{1}{2} C_1(M,n,d),
$$
i.e.,
\begin{equation} \label{eqn:substimes}
m \geq \dfrac{ \ln \left( 2^{d^n+d} M^{d^n+1} n^{d^{n+1}+d+1} d^{n d^n} (d+1)^{n-1} \right) }
{\ln n - \ln (n-1)},
\end{equation}
$m$-th WDS \eqref{eqn:subs} satisfies
\begin{equation}
f(\alpha_1, \ldots, \alpha_n) \geq \frac{1}{2} C_1(M,n,d).
\end{equation}

From the deduction of (\uppercase\expandafter{\romannumeral 1}), we can see that if
\begin{equation} \label{eqn:c3}
m > \dfrac{ \ln \left( 2^{d^n+1} M^{d^n+1} n^{d^{n+1}+d} d^{(n+1)d+n d^n} (d+1)^{(n-1)(n+2)} \right)}
{\ln n - \ln (n-1)},
\end{equation}
and $m$ satisfies \eqref{eqn:substimes},
there exists a form in $\wds^{(m)}(f)$,
it has complete monomials, and all coefficients are negative.
Comparing the right hand sides of \eqref{eqn:substimes} and \eqref{eqn:c3},
the latter is larger, so it suffices~\eqref{eqn:c3}.

\end{proof}

\end{document}